\documentclass[letterpaper]{IEEEtran}
\ifCLASSINFOpdf
\else
\fi
\hyphenation{op-tical net-works semi-conduc-tor}

\usepackage{flushend}
\usepackage{stfloats}
\usepackage{subfigure}
\usepackage{fancyhdr}
\usepackage{mathrsfs}
\usepackage{graphicx}
\usepackage{geometry}
\usepackage{color}
\usepackage{placeins}
\usepackage{float}
\usepackage{tabularx,colortbl}
\usepackage{amsmath}
\usepackage{amsfonts}
\usepackage{amssymb}
\usepackage{amsthm}
\usepackage{bm}
\usepackage[numbers,sort&compress]{natbib}
\geometry{left=1.75cm, right=1.75cm, top=1.6cm, bottom=1.6cm}

\begin{document}
%
\title{Wideband mmWave Channel Estimation for Hybrid Massive MIMO with Low-Precision ADCs}
\author{\IEEEauthorblockN{Yucheng~Wang\IEEEauthorrefmark{0}, Wei~Xu\IEEEauthorrefmark{0}, Hua~Zhang\IEEEauthorrefmark{0}, and Xiaohu~You\IEEEauthorrefmark{0}}
\vspace{-1cm}
\thanks{Manuscript received July 20, 2018; accepted September 10, 2018. This work was supported by the NSFC under grants 61871109, 61601115, and 61571118, the Six Talent Peaks Project in Jiangsu Province under GDZB-005, and the Open Research Fund of the State Key Laboratory of ISN under Grant ISN18-03. The editor coordinating the review of this paper and approving it for publication was S. K. Mohammed. \emph{(Corresponding author: Wei Xu.)}

Y. Wang and W. Xu are with the National Mobile Communications Research Laboratory (NCRL), Southeast University, Nanjing, China, and also with the State Key Laboratory of Integrated Services Networks, Xidian University, Xi'an, China (\{yc.wang, wxu\}@seu.edu.cn).

H. Zhang and X. You are with the NCRL, Southeast University, Nanjing, China (\{huazhang, xhyu\}@seu.edu.cn).}
}


%


\maketitle
\thispagestyle{fancy}
\renewcommand{\headrulewidth}{0pt}
\pagestyle{fancy}
\cfoot{}
\rhead{\thepage}

\newtheorem{mylemma}{Lemma}
\newtheorem{mytheorem}{Theorem}
\newtheorem{mypro}{Proposition}
\begin{abstract}
In this article, we investigate channel estimation for wideband millimeter-wave (mmWave) massive multiple-input multiple-output (MIMO) under hybrid architecture with low-precision analog-to-digital converters (ADCs). To design channel estimation for the hybrid structure, both analog processing components and frequency-selective digital combiners need to be optimized. The proposed channel estimator follows the typical linear-minimum-mean-square-error (LMMSE) structure and applies for an arbitrary channel model. Moreover, for sparsity channels as in mmWave, the proposed estimator performs more efficiently by incorporating orthogonal matching pursuit (OMP) to mitigate quantization noise caused by low-precision ADCs. Consequently, the proposed estimator outperforms conventional ones as demonstrated by computer simulation results.
\end{abstract}

\begin{IEEEkeywords}
MmWave, channel estimation, hybrid, ADC.
\end{IEEEkeywords}

%
\IEEEpeerreviewmaketitle

\vspace{-0.5cm}
\section{Introduction}
\vspace{-0.2cm}
Millimeter wave (mmWave) massive multiple-input multiple-output (MIMO) is an emerging technology for future wireless networks. Typical massive MIMO is equipped with a large number of radio frequency (RF) chains, which are cost- and power-hungry, especially for wideband mmWave systems. Hybrid architecture with limited RF chains recently attracts much attention for mmWave massive MIMO to reduce cost and complexity \cite{LeLiangWCL}. However, it imposes additional challenges for channel estimation because fully digital processing is no longer accessible. In \cite{ghauch2015subspace}, a subspace-based channel estimator has been presented for narrowband massive MIMO with the hybrid structure. In \cite{venugopal2017channel}, the channel sparsity has been further utilized and an orthogonal matching pursuit (OMP)-based least-square (LS) estimator has been proposed. A distributed grid matching pursuit (DGMP) algorithm \cite{dailinlong2016CL} has been proposed to solve the power leakage in uplink channel estimation for mmWave MIMO. Further in \cite{heath2018JSTSP}, a simultaneous weighted-OMP estimator has been developed.

For wideband mmWave MIMO, high-precision analog-to-digital converters (ADCs) are expensive and power-hungry \cite{walden1999analog}. In order to alleviate the burden, low-precision~ADCs have been introduced. However, channel estimate is deteriorated due to the nonlinear quantization of low-precision ADCs. A linear-minimum-mean-square-error (LMMSE) estimator has been developed in \cite{li2017channel} for massive MIMO with 1-bit ADCs.

In this article, we investigate channel estimation for mmWave massive MIMO with the hybrid architecture using low-precision ADCs. Main contributions are summarized as
\begin{itemize}
\item The problem is reformulated into a statistically equivalent linear estimation problem which optimizes both frequency-flat analog processing weights and frequency-selective digital ones in sequence.
\item Statistics of the equivalent noise are derived. Consequently the optimal digital estimator is obtained in closed form.
\item The proposed hybrid estimator applies for arbitrary channel models. If \emph{a priori} sparsity presents, the proposed estimator incorporating OMP can potentially further mitigate quantization noise caused by low-precision ADCs.
\end{itemize}

The rest of this paper is organized as follows. In Section II, we introduce the system model. The channel estimation algorithm is presented in Section III. Simulation results and conclusions are presented in Sections IV and V, respectively.

\emph{Notations}: $\mathbf{A}^{\rm H}$, $\mathbf{A}^{\rm T}$, and $\mathbf{A}^*$ are the conjugate transpose, transpose, and conjugate of $\mathbf{A}$, respectively. ${\rm vec}(\mathbf{A})$ and ${\rm diag}(\mathbf{A})$, respectively, return vectorization and the diagonal matrix containing diagonal elements of $\mathbf{A}$. Operator $\otimes$ represents the Kronecker product. $\mathbb{E}\{\mathbf{A}\}$, ${\rm Tr}(\mathbf{A})$, and $[\mathbf{A}]_{ij}$ are the expectation, trace, and $(i,j)$th element of $\mathbf{A}$, respectively. $\|\mathbf{a}\|_p$ is the $l_p$-norm of vector $\mathbf{a}$, and $\lceil a\rceil$ is the ceiling function of scalar $a$. $\mathcal{CN}(0,1)$ indicates circularly symmetric complex Gaussian distribution with mean 0 and variance 1.

\begin{figure}
\centering
\includegraphics[width=3.2in]{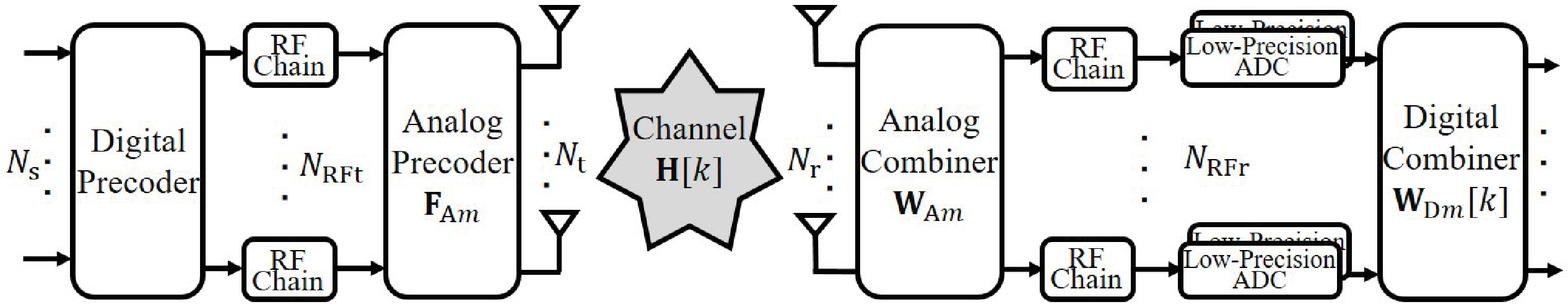}
\vspace{-0.3cm}
\caption{Block diagram.}
\label{Fig:system_model}
\vspace{-0.8cm}
\end{figure}

\vspace{-0.3cm}
\section{System Model}
\vspace{-0.2cm}

We consider a wideband mmWave massive MIMO-OFDM system with low-precision ADCs where hybrid precoder and combiner are used. The transmitter is equipped with $N_{\rm t}$ antennas driven by a smaller number, $N_{\rm RFt}$, of RF chains. The receiver is equipped with $N_{\rm r}$ antennas and $N_{\rm RFr}$ RF chains.

Channel estimation in OFDM is usually performed in the frequency domain. Let $N_{\rm c}$ and $K$ be the maximal delay tap and the number of subcarriers, respectively. The channel response matrix of the $d$th path can be expressed as \cite{alkhateeb2014channel}
\begin{equation}\label{eq:Hd}
\vspace{-0.2cm}
\mathbf{H}_d=\sum_{l=1}^{N_{\rm p}}\alpha_l\delta(d - \tau_l)\mathbf{a}_{\rm R}(\theta_{{\rm R}l})\mathbf{a}_{\rm T}^{\rm H}(\theta_{{\rm T}l}),
\vspace{-0.1cm}
\end{equation}
for $d=0, 1, \cdots, N_{\rm c}-1$, where $\alpha_l$ is the channel gain of the $l$-th path, $\tau_l$ is the normalized path delay, and $\mathbf{a}_{\rm R}(\theta_{{\rm R}l})\in\mathbb{C}^{N_{\rm r}\times1}$ and $\mathbf{a}_{\rm T}(\theta_{{\rm T}l})\in\mathbb{C}^{N_{\rm t}\times1}$ are antenna array response vectors under uniform linear array (ULA) setup at the receiver and the transmitter, respectively.
\vspace{-0.2cm}
\begin{equation}
\mathbf{a}_{\rm R}(\theta_{{\rm R}l})=\frac{1}{\sqrt{N_{\rm r}}}\big[1, e^{-j2\pi\theta_{{\rm R}l}}, \cdots, e^{-j2\pi(N_{\rm r}-1)\theta_{{\rm R}l}}\big]^{\rm T},
\vspace{-0.2cm}
\end{equation}
where $\theta_{{\rm R}l}=\frac{s}{\lambda}{\rm cos}(\phi_{{\rm R}l})$ is the directional cosine with carrier wavelength, $\lambda$, antenna spacing, $s~(s\geq\frac{\lambda}{2})$, and angle of arrival (AoA), $\phi_{{\rm R}l}$. $\mathbf{a}_{\rm T}(\theta_{{\rm T}l})$ can be similarly expressed. Then the frequency-domain channel response at the $k$th subcarrier can be represented as
\vspace{-0.2cm}
\begin{align}
\nonumber
&\mathbf{H}[k]
=\sum_{d=0}^{N_{\rm c}-1}\mathbf{H}_de^{-j2\pi\frac{kd}{K}}\\
&=\sum_{l=1}^{N_{\rm p}} \alpha_l \underbrace{\left(\sum_{d=0}^{N_{\rm c}-1}\delta(d-\tau_l)e^{-j2\pi\frac{kd}{K}}\right)}_{\beta_{kl}} \mathbf{a}_{\rm R}(\theta_{{\rm R}l}) \mathbf{a}_{\rm T}^{\rm H}(\theta_{{\rm T}l}).
\end{align}
\vspace{-0.4cm}

In hybrid massive MIMO illustrated in Fig. 1, the main task is to recover $N_{\rm r}N_{\rm t}$ channel coefficients. Let each training be a transmission of $N_{\rm RFt}$ orthogonal pilots formed by signals at $N_{\rm RFt}$ RF chains. $N_{\rm RFr}N_{\rm RFt}$ coefficients can be estimated at the receiver using $N_{\rm RFt}$ RF chains with each training. Thus, the multiple of, $L\triangleq\lceil \frac{N_{\rm r}N_{\rm t}}{N_{\rm RFr}N_{\rm RFt}}\rceil$, trainings are needed.

Let $M$ be the times of channel use within a coherence time. Then $MN_{\rm RFr}$ observations are utilized to estimate the $N_{\rm r}N_{\rm t}$-dimensional channel vector. For uncorrelated channels, at least $M = \lceil N_{\rm r} N_{\rm t} / N_{\rm {RFr}} \rceil$ channel uses are needed. Let $\mathbf{s}_{m_1}[k], \mathbf{s}_{m_2}[k]\in\mathbb{C}^{N_{\rm RFt}\times1}$ be pilots at the $k$th subcarrier during the $m_1$th and $m_2$th $(m_1, m_2\in\{1,\cdots,M\})$ channel use,
\vspace{-0.2cm}
\begin{equation}
\vspace{-0.2cm}
\mathbf{s}^{\rm H}_{m_1}[k]\mathbf{s}_{m_2}[k] = \begin{cases}
0  &m_1\neq m_2\\
P N_{\rm RFt}           &m_1= m_2,
\end{cases}
\vspace{-0.1cm}
\end{equation}
where $P$ is the pilot power, $m_1 = iN_{\rm RFt}+j_1$ and $m_2 = iN_{\rm RFt}+j_2$ with $i \in \{0,1,\cdots, M_{\rm g}-1\}$ where $M_{\rm g}=M/N_{\rm RFt}~(M_{\rm g}\in \mathbb{N}^+)$ and $1\leq j_1, j_2 \leq N_{\rm RFt}$. The corresponding received signal at the $k$th subcarrier is
\begin{equation}
\vspace{-0.1cm}
\mathbf{r}_m[k]=\mathbf{H}[k]\mathbf{F}_{{\rm A}m}\mathbf{s}_m[k]+\mathbf{v}_m[k],
\vspace{-0.1cm}
\end{equation}
where $\mathbf{F}_{{\rm A}m}\in \mathbb{C}^{N_{\rm t} \times N_{\rm RFt}}$ is the analog precoder and $\mathbf{v}_m[k]\sim\mathcal{CN}(\mathbf{0},\sigma^2_v\mathbf{I}_{N_{\rm r}})$ is the noise vector. At the receiver, an analog estimator, $\mathbf{W}_{{\rm A}m}\in \mathbb{C}^{N_{\rm r} \times N_{\rm RFr}}$, is first implemented. Both $\mathbf{F}_{{\rm A}m}$ and $\mathbf{W}_{{\rm A}m}$ represent the operations of a phase-shifter network connecting the large antenna array to limited RF chains. The phase shifter adjusts only the phase of input signal without changing its amplitude. Thus, each element in $\mathbf{F}_{{\rm A}m}$ and $\mathbf{W}_{{\rm A}m}$ is restricted as a unity-magnitude value. Then after the ADC quantization, the channel coefficients can be estimated via a linear digital estimator, $\mathbf{W}_{{\rm D}m}[k] \in \mathbb{C}^{N_{\rm RFr} \times N_{\rm r}N_{\rm t}}$, incorporating with the former $\mathbf{W}_{{\rm A}m}$. Note that $\mathbf{F}_{{\rm A}m}$ and $\mathbf{W}_{{\rm A}m}$ are performed on the wideband signals in the time domain. Consequently, the same analog processing components apply for all subcarriers, which are thus described as frequency-independent. Digital processing is performed in the frequency domain and thus can be different across subcarriers, which is termed as frequency-selective. The channel estimating problem is
\begin{align}\label{eq:ymk}
&\hat{\mathbf{h}}^{\star}[k]\!=\!{\rm arg} \min_{\mathbf{F}_{{\rm A}m},\mathbf{W}_{{\rm A}m},\mathbf{W}_{{\rm D}m}[k]} \mathbb{E}\Big\{||\hat{\mathbf{h}}[k]\!-\!{\rm vec}(\mathbf{H}[k])||^2_2\Big\},\\\nonumber
&{\rm s.t.}
\begin{cases}
\hat{\mathbf{h}}[k]=\mathbf{W}^{\rm H}_{\rm D}[k]\mathbf{y}[k],~\mathbf{y}[k]\triangleq [\mathbf{y}_1[k], \cdots, \mathbf{y}_M[k]]^{\rm T}\\
\mathbf{W}_{\rm D}[k]\!\triangleq\! [\mathbf{W}_{\rm D1}[k], \cdots, \mathbf{W}_{{\rm D}M}[k]]^{\rm T}\\
\mathbf{y}_m[k] \!=\! \mathcal{Q}\Big(\mathbf{W}_{{\rm A}m}^{\rm H}\mathbf{H}[k]\mathbf{F}_{{\rm A}m}\mathbf{s}_m[k] \!+\! \mathbf{W}_{{\rm A}m}^{\rm H}\mathbf{v}_m[k]\Big)\\
|[\mathbf{F}_{{\rm A}m}]_{ij}|=|[\mathbf{W}_{{\rm A}m}]_{ij}|=1,
\end{cases}
\end{align}
where $\mathcal{Q}(\cdot)$ represents the ADC quantization operation.

\vspace{-0.3cm}
\section{Wideband Hybrid Channel Estimation}
\vspace{-0.1cm}
The wideband channel estimation problem in \eqref{eq:ymk} is challenging. This section focuses on designing the linear hybrid estimators, including $\mathbf{W}_{{\rm A}m}$, $\mathbf{W}_{{\rm D}m}[k]$, and $\mathbf{F}_{{\rm A}m}$.

With limited RF chains, we only have access to a much smaller number of observations per estimation than the number of channel coefficients to be estimated. The design of $\mathbf{W}_{{\rm D}m}[k]$ is frequency-selective while the design of analog ones, $\mathbf{F}_{{\rm A}m}$ and $\mathbf{W}_{{\rm A}m}$, are frequency-independent. Moreover, we have to take into account the ADC quantization $\mathcal{Q}$.

\vspace{-0.3cm}
\subsection{Channel Estimation Formulation}
\vspace{-0.1cm}

From \cite{alkhateeb2014channel}, it is convenient to project channel coefficients onto the angular domain. We use dictionary matrices consisting of ULA response vectors, whose sizes are chosen as $N_{\rm t}$ and $N_{\rm r}$ which denote the numbers of \emph{resolvable angles} at transmitter and receiver, respectively. Specifically, let
\begin{equation}
\mathbf{A}_{\rm t}= \big[\mathbf{a}_{\rm T}(\tilde{\theta}_{{\rm T},1}),\mathbf{a}_{\rm T}(\tilde{\theta}_{{\rm T},2}), \cdots, \mathbf{a}_{\rm T}(\tilde{\theta}_{{\rm T},N_{\rm t}})\big]\in\mathbb{C}^{N_{\rm t}\times N_{\rm t}}
\end{equation}
be the dictionary matrix consisting of columns $\mathbf{a}_{\rm T}(\tilde{\theta}_{{\rm T},p})~(p\in\{1,2,\cdots,N_{\rm t}\})$ with $\tilde{\theta}_{{\rm T},p}\in(-0.5,0.5)$ drawn from a fixed equal interval as $\tilde{\theta}_{{\rm T},p}\triangleq\frac{1}{N_{\rm t}}\big(p-\frac{N_{\rm t}+1}{2}\big)$, and let
\begin{equation}
\mathbf{A}_{\rm r}\!=\! \big[\mathbf{a}_{\rm R}(\tilde{\theta}_{{\rm R},1}),\mathbf{a}_{\rm R}(\tilde{\theta}_{{\rm R},2}), \cdots, \mathbf{a}_{\rm R}(\tilde{\theta}_{{\rm R},N_{\rm r}})\big]\in\mathbb{C}^{N_{\rm r}\times N_{\rm r}}
\end{equation}
be the dictionary matrix composed of $\mathbf{a}_{\rm R}(\tilde{\theta}_{{\rm R},q})~(q\in\{1,2,\cdots,N_{\rm r}\})$ with $\tilde{\theta}_{{\rm R},q}\triangleq\frac{1}{N_{\rm r}}\big(q-\frac{N_{\rm r}+1}{2}\big)$. The equivalent channel matrix after a whole-space projection is defined as
\begin{align}\label{eq:hk2}
\mathbf{H}_{{\rm v}}[k]\triangleq\mathbf{A}_{\rm r}^{\rm H}\mathbf{H}[k]\mathbf{A}_{\rm t}.
\end{align}

Note that the channel model for uniform planar array (UPA) \cite{dailinlong2016JSAC} shares a similar structure as \eqref{eq:hk2}. The only difference lies in that the dictionary matrices for UPA contain an extra quantized angle grid on vertical. Thus the following proposal can also apply for UPA with corresponding subtle changes.

After defining $\mathbf{\Psi}\triangleq\mathbf{A}_{\rm t}^{*}\otimes\mathbf{A}_{\rm r}$ and using matrix vectorization for notational simplicity, we denote $\mathbf{h}_{\rm v}[k]\triangleq{\rm vec}(\mathbf{H}_{\rm v}[k])$ as the equivalent channel coefficient vector to be estimated. From \eqref{eq:hk2} and the unitary properties of $\mathbf{A}_{\rm t}$ and $\mathbf{A}_{\rm r}$, we  write
\begin{align}
\nonumber
\mathbf{y}_m[k]
&\!=\!\mathcal{Q}\Big(\big((\mathbf{s}^{\rm T}_m[k]\mathbf{F}^{\rm T}_{{\rm A}m})\!\otimes\!\mathbf{W}_{{\rm A}m}^{\rm H}\big){\rm vec}\big(\mathbf{H}[k]\big) \!+\! \mathbf{W}_{{\rm A}m}^{\rm H}\mathbf{v}_m[k]\Big)\\
&\!=\!\mathcal{Q}\Big(\mathbf{\Phi}_m[k]\mathbf{\Psi}\mathbf{h}_{\rm v}[k] + \mathbf{e}_m[k]\Big),
\end{align}
where $\mathbf{\Phi}_m[k]\!\triangleq\!\big(\mathbf{s}^{\rm T}_m[k]\mathbf{F}^{\rm T}_{{\rm A}m}\big)\!\otimes\!\mathbf{W}_{{\rm A}m}^{\rm H}$,~$\mathbf{e}_m[k]\!\triangleq\!\mathbf{W}_{{\rm A}m}^{\rm H}\mathbf{v}_m[k]$.

We have to use at least $M$ trainings within a coherence time to estimate a complete $\mathbf{h}_{\rm v}[k]$. Stacking the received signal vectors corresponding to the $M$ trainings, we have
\begin{equation}\label{eq:yk1}
\mathbf{y}[k]\!\triangleq\!\left[\mathbf{y}_1[k],\cdots,\mathbf{y}_M[k]\right]^{\rm T}\!=\!\mathcal{Q}\left(\mathbf{\Phi}[k]\mathbf{\Psi}\mathbf{h}_{\rm v}[k]
\!+\!\mathbf{e}[k]\right),
\end{equation}
where $\mathbf{\Phi}[k]$ and $\mathbf{e}[k]$ are the corresponding stacked vectors of $\mathbf{\Phi}_m[k]$ and $\mathbf{e}_m[k]~(m\in\{ 1,2,\cdots,M\})$, respectively.

\newcounter{TempEqCnt}
\setcounter{TempEqCnt}{\value{equation}}
\setcounter{equation}{23}
\begin{figure*}[hb]
\vspace{-0.2cm}
\hrulefill
\vspace{-0.25cm}
\begin{align}\label{eq:Eee}
&\mathbb{E}\big\{\hat{\mathbf{e}}[k]\hat{\mathbf{e}}^{\rm H}[k]\big\}\nonumber
\overset{(a)}{=}(1-\eta_b)\Big((1-\eta_b)\mathbb{E}\{\mathbf{e}[k]\mathbf{e}^{\rm H}[k]\}+\eta_b {\rm diag}\big(\mathbb{E}\{(\mathbf{\Phi}[k]\mathbf{\Psi}\mathbf{h}_{\rm v}[k]+\mathbf{e}[k])(\mathbf{\Phi}[k]\mathbf{\Psi}\mathbf{h}_{\rm v}[k]+\mathbf{e}[k])^{\rm H}\}\big)\Big)\\
&\overset{(b)}{=}(1-\eta_b)\Big(\sigma_v^2\mathbf{I}_{MN_{\rm RFr}}+\eta_b\sigma_{\rm h}^2{\rm diag}\left(\mathbb{E}\{\mathbf{\Phi}[k]\mathbf{\Psi}\mathbf{\Psi}^{\rm H}\mathbf{\Phi}^{\rm H}[k]\}\right)\Big)\overset{(c)}{=}(1-\eta_b)(\sigma_v^2+\eta_b\sigma_{\rm h}^2PN_{\rm RFt})\mathbf{I}_{MN_{\rm RFr}}\triangleq\sigma^2_{\hat{\mathbf{e}}}\mathbf{I}_{MN_{\rm RFr}}
\end{align}
\vspace{-0.8cm}
\end{figure*}
\setcounter{equation}{\value{equation}}

\setcounter{equation}{24}
\begin{figure*}[hb]
\hrulefill
\vspace{-0.25cm}
\begin{align}\label{eq:Ephipsi}
\nonumber
&{\rm diag}\Big(\mathbb{E}\big\{\mathbf{\Phi}[k]\mathbf{\Psi}\mathbf{\Psi}^{\rm H}\mathbf{\Phi}^{\rm H}[k]\big\}\Big)\\\nonumber
\overset{(a)}{=}&{\rm diag}\Big({\rm diag}\left(\mathbb{E}\{\left(\mathbf{s}^{\rm T}_1[k]\mathbf{F}^{\rm T}_{{\rm A}1}\otimes\mathbf{W}_{{\rm A}1}^{\rm H}\right)\left(\mathbf{F}^*_{{\rm A}1}\mathbf{s}^*_1[k]\otimes\mathbf{W}_{{\rm A}1}\right)\}\right),\cdots,{\rm diag}\left(\mathbb{E}\{\left(\mathbf{s}^{\rm T}_M[k]\mathbf{F}^{\rm T}_{{\rm A}M}\otimes\mathbf{W}_{{\rm A}M}^{\rm H}\right)\left(\mathbf{F}^*_{{\rm A}M}\mathbf{s}^*_M[k]\otimes\mathbf{W}_{{\rm A}M}\right)\}\right)\Big)\\\nonumber
\overset{(b)}{=}&{\rm diag}\Big({\rm diag}\left(\mathbb{E}\{\mathbf{s}^{\rm T}_1[k]\mathbf{s}^*_1[k]\otimes\mathbf{I}_{N_{{\rm RFr}}}\}\right),\cdots,{\rm diag}\left(\mathbb{E}\{\mathbf{s}^{\rm T}_M[k]\mathbf{s}^*_M[k]\otimes\mathbf{I}_{N_{{\rm RFr}}}\}\right)\Big)\\
=&{\rm diag}\Big({\rm diag}\left(PN_{\rm RFt}\mathbf{I}_{N_{{\rm RFr}}}\right),\cdots,{\rm diag}\left(PN_{\rm RFt}\mathbf{I}_{N_{{\rm RFr}}}\right)\Big)=PN_{\rm RFt}\mathbf{I}_{MN_{\rm RFr}}
\end{align}
\end{figure*}
\setcounter{equation}{\value{equation}}

The quantization of ADCs is in general non-linear. Thanks to studies \cite{mezghani2012capacity}\cite{Xu2017WCL} which applied the Bussgang theorem \cite{bussgang1952crosscorrelation} on modelling non-linear quantization, it showed that the output of the non-linear quantizer with Gaussian input can be expressed in closed form by decomposing it into a desired signal component and an uncorrelated quantization distortion, $\mathbf{e}_q$. The output signal after ADC quantization is modelled as
\setcounter{equation}{11}
\begin{align}\label{eq:yk}
\mathbf{y}[k]=(1-\eta_b)\mathbf{\Phi}[k]\mathbf{\Psi}\mathbf{h}_{\rm v}[k]+\hat{\mathbf{e}}[k],
\end{align}
where $\eta_b$ is the distortion factor in terms of the number of quantization bits of ADCs, i.e., $b$, and
\begin{equation}
\hat{\mathbf{e}}[k]=(1-\eta_b)\mathbf{e}[k]+\mathbf{e}_q
\end{equation}
\begin{table}
\center
\caption{Values of quantization distortion factors $\eta_b$ \cite{Max1960TIT}}
\begin{tabular}{|c|c|c|c|c|c|c|c|c|} 
\hline
$b$&1&2&3&4&5\\
\hline
$\eta_b$&0.3634&0.1175&0.03454&0.009497&0.002499\\
\hline
\end{tabular}
\label{Tab:1}
\vspace{-0.5cm}
\end{table}
is the equivalent noise including both the ADC quantization error and the AWGN. The value of $\eta_b$ is determined by the quantization precision. In the condition of high-precision optimal non-uniform quantizations, $\eta_b$ can be approximately determined by a closed-form expression \cite{mezghani2012capacity}. For a general ADC precision, there is no explicit expression for determining $\eta_b$. While in \cite{Max1960TIT}, typical values of $\eta_b$ corresponding to various precisions are exemplified in TABLE I. Through the digital combiners at the receiver, the estimate in \eqref{eq:ymk} is represented in the virtual-angular domain as
\begin{equation}\label{eq:hvk}
\hat{\mathbf{h}}_{\rm v}[k]\!=\!\mathbf{W}_{{\rm D}}^{\rm H}[k]\Big((1-\eta_b)\mathbf{\Phi}[k]\mathbf{\Psi}\mathbf{h}_{\rm v}[k]\!+\!\hat{\mathbf{e}}[k]\Big),
\end{equation}
where $\mathbf{W}_{\rm D}[k]\!\triangleq\!\big[\mathbf{W}_{\rm D1}[k],\cdots,\mathbf{W}_{{\rm D}M}[k]\big]^{\rm T}$. In \eqref{eq:hvk}, the channel estimation problem is converted into finding proper $\mathbf{W}_{\rm D}[k]$ and $\mathbf{\Phi}[k]$, which requires careful evaluation of the randomness of terms in \eqref{eq:hvk} and will be discussed subsequently.

Without \emph{priori} channel statistics or exploiting channel sparsity, we have to solve the estimation problem in \eqref{eq:hvk} with $N_{\rm r}N_{\rm t}$ coefficients in $\hat{\mathbf{h}}_{\rm v}[k]$, or equivalently $\mathbf{h}_{\rm v}[k]$. When \emph{a priori} channel sparsity, $N_{\rm v}~(N_{\rm v}\!\ll\! N_{\rm r}N_{\rm t})$, is exploited, problem \eqref{eq:hvk} is rewritten to incorporate compressed sensing (CS) techniques, e.g., OMP, for complexity and pilot overhead reduction. With OMP, we use a uniform selective matrix to pick the dominant coefficients for estimation. It yields
\begin{align}\label{eq:hNZvk}
\mathbf{h}^{\rm NZ}_{\rm v}[k]\!=\!\mathbf{P}^{\rm T}_{\rm v}[k]\mathbf{h}_{\rm v}[k]\!\triangleq\!\big[\mathbf{e}_{\pi(1)}, \cdots, \mathbf{e}_{\pi(N_{\rm v})}\big]^{\rm T}\mathbf{h}_{\rm v}[k],
\end{align}
where $\mathbf{e}_{\pi(i)}~(\pi(i)\in\{1,2,\cdots, N_{\rm r}N_{\rm t}\})$ is a selecting vector with the $\pi(i)$th element being 1. According to \eqref{eq:hNZvk}, we only need to estimate $N_{\rm v}$ non-zero channel coefficients. Potentially, $M$ can be at most reduced to $\big\lceil N_{\rm v}/N_{\rm RFr}\big\rceil$.

The determination of $\mathbf{P}_{\rm v}[k]$ depends on whether we have \emph{priori} channel sparsity information. If there is no \emph{priori} sparsity information, we simply have $\mathbf{P}_{\rm v}[k]=\mathbf{I}_{N_{\rm r}N_{\rm t}}$. For sparse channels without knowing $N_{\rm v}$, we apply OMP to obtain locations of $N_{\rm v}$ dominant channel coefficients. In this case, the selecting matrix, $\mathbf{P}_{\rm v}[k]$, can be elaborated as in Appendix A.

\vspace{-0.5cm}
\subsection{Linear Channel Estimator Optimization}
\vspace{-0.1cm}
The channel estimation problem in \eqref{eq:hNZvk} and \eqref{eq:hvk} is to design $\mathbf{W}_{\rm D}[k]$ and $\mathbf{\Phi}[k]$, equivalently $\mathbf{F}_{{\rm A}m}$ and $\mathbf{W}_{{\rm A}m}$, to estimate $\mathbf{h}^{\rm NZ}_{\rm v}[k]$. By substituting \eqref{eq:hNZvk} into \eqref{eq:yk}, we have
\begin{align}\label{eq:Byk}
\nonumber
\mathbf{y}[k]
&=(1-\eta_b)\big(\mathbf{\Phi}[k]\mathbf{\Psi}\mathbf{P}_{\rm v}[k]\big)\mathbf{h}^{\rm NZ}_{\rm v}[k]+\hat{\mathbf{e}}[k]\\
&\triangleq(1-\eta_b)\mathbf{\Omega}[k]\mathbf{h}^{\rm NZ}_{\rm v}[k]+\hat{\mathbf{e}}[k].
\end{align}
Without any \emph{priori} channel directivity information, it is reasonable to apply isotropic pilot directions, which corresponds to i.i.d. Gaussian $\mathbf{F}_{{\rm A}m}$ and corresponding $\mathbf{W}_{{\rm A}m}$. Under the hybrid architecture, however, generating i.i.d. Gaussian matrix is infeasible due to analog hardware limitation. Alternatively, we choose that $\mathbf{F}_{{\rm A}m}$ and $\mathbf{W}_{{\rm A}m}$ have phases drawn uniformly from $[0,2\pi)$ via phase shifters with unimodular constraints. In practice, we can generate fixed analog processing matrices corresponding to the uniform distribution. These fixed matrices form a codebook, in which each matrix is a codeword. The codebook can be predetermined and shared by both sides.

Our goal remains to optimize $\mathbf{W}_{\rm D}[k]\in\mathbb{C}^{MN_{\rm RFr} \times N_{\rm v}}~(N_{\rm v}\leq N_{\rm r}N_{\rm t})$ to estimate $\mathbf{h}^{\rm NZ}_{\rm v}[k]$ from $\mathbf{y}[k]$ in \eqref{eq:Byk}, i.e.,
\begin{equation}\label{eq:BhNZv}
\hat{\mathbf{h}}^{\rm NZ}_{\rm v}[k]\!=\!(1-\eta_b)\mathbf{W}_{\rm D}^{\rm H}[k]\mathbf{\Omega}[k]\mathbf{h}^{\rm NZ}_{\rm v}[k]\!+\!\mathbf{W}_{\rm D}^{\rm H}[k]\hat{\mathbf{e}}[k].
\end{equation}

From \eqref{eq:BhNZv} and Appendix B, the optimal digital estimator in terms of MMSE is derived as
\begin{equation}\label{eq:optW}
\mathbf{W}^{\rm \star}_{\rm D}[k]\!=\!\frac{\mathbf{\Omega}[k]}{1\!-\!\eta_b}\left(\mathbf{\Omega}^{\rm H}[k]\mathbf{\Omega}[k]\!+\!\frac{\sigma^2_{\hat{\mathbf{e}}}\mathbf{I}_{N_{\rm v}}}{(1-\eta_b)^2\sigma^2_{\rm h}}\right)^{-1},
\end{equation}
where $\sigma^2_{\hat{\mathbf{e}}}$ is the variance of each element of the effective noise vector, $\hat{\mathbf{e}}[k]$, and $\sigma^2_{\rm h}$ is the large-scale fading factor of $\mathbf{h}_{\rm v}[k]$.

To this end, we have obtained the channel estimate as
\begin{equation}
\hat{\mathbf{h}}[k]\!=\!\big(\mathbf{A}_{\rm t}^{*}\otimes\mathbf{A}_{\rm r}\big)\big(\mathbf{P}^{\rm T}_{\rm v}[k]\big)^{-1}\big(\mathbf{W}_{\rm D}^{\rm \star}[k]\big)^{\rm H}\mathbf{y}[k],
\end{equation}
where $\big(\mathbf{P}^{\rm T}_{\rm v}[k]\big)^{-1}$ is the inverse operation of $\mathbf{P}^{\rm T}_{\rm v}[k]$.

For completeness, if correlations across the frequency domain are considered, pilots can be inserted every few subcarriers. The minimum required length for channel training is $\lceil K\Delta f/B_{\rm c}\rceil$ in the frequency domain, where $B_{\rm c}$ and $\Delta f$ are the coherence bandwidth and subcarrier spacing, respectively.

\setcounter{equation}{26}
\begin{figure*}[ht]
\begin{align}\label{eq:MSE}
\nonumber
&{\rm MSE}
=\mathbb{E}\Big\{\big|\big|\hat{\mathbf{h}}^{\rm NZ}_{\rm v}[k]-\mathbf{h}^{\rm NZ}_{\rm v}[k]\big|\big|_2^2\Big\}=\mathbb{E}\Big\{\big|\big|((1-\eta_b)\mathbf{W}_{\rm D}^{\rm H}[k]\mathbf{\Omega}[k]-\mathbf{I}_{N_{\rm v}})\mathbf{h}^{\rm NZ}_{\rm v}[k]+\mathbf{W}_{\rm D}^{\rm H}[k]\hat{\mathbf{e}}[k]\big|\big|_2^2\Big\}\\
=&(1\!-\!\eta_b)\sigma^2_{\rm h}\Big((1\!-\!\eta_b){\rm Tr}\big(\mathbf{W}_{\rm D}^{\rm H}[k]\mathbf{\Omega}[k]\mathbf{\Omega}^{\rm H}[k]\mathbf{W}_{\rm D}[k]\big)\!-\!{\rm Tr}\big(\mathbf{W}_{\rm D}[k]\mathbf{\Omega}^{\rm H}[k]\!+\!\mathbf{\Omega}[k]\mathbf{W}_{\rm D}^{\rm H}[k]\big)\Big)\!+\!\sigma^2_{\rm h}N_{\rm v}\!+\!\sigma^2_{\hat{\mathbf{e}}}{\rm Tr}\big(\mathbf{W}_{\rm D}^{\rm H}[k]\mathbf{W}_{\rm D}[k]\big)
\end{align}
\hrulefill
\end{figure*}
\setcounter{equation}{\value{equation}}

\setcounter{equation}{27}
\begin{figure*}[ht]
\vspace{-0.6cm}
\begin{align}\label{eq:partial}
\frac{\partial ({\rm MSE})}{\partial\mathbf{W}_{\rm D}^{*}[k]}
=(1-\eta_b)^2\sigma^2_{\rm h}\mathbf{\Omega}[k]\mathbf{\Omega}^{\rm H}[k]\mathbf{W}_{\rm D}[k]-(1-\eta_b)\sigma^2_{\rm h}\mathbf{\Omega}[k]+\sigma^2_{\hat{\mathbf{e}}}\mathbf{W}_{\rm D}[k]
\end{align}
\hrulefill
\vspace{-0.6cm}
\end{figure*}
\setcounter{equation}{\value{equation}}

\begin{figure}
\centering
\begin{minipage}[t]{0.49\linewidth}
\includegraphics[width=1.85in]{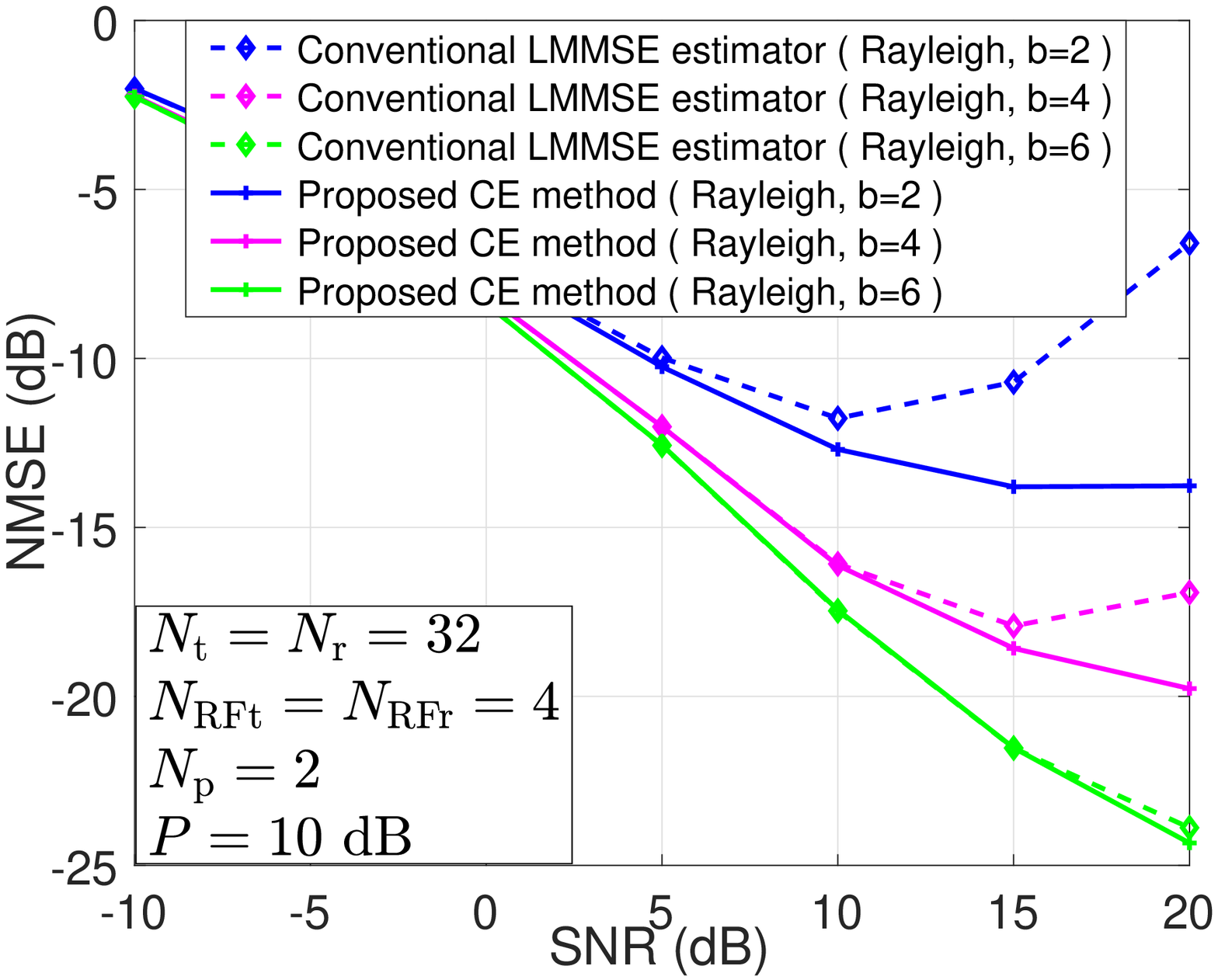}
\vspace{-0.6cm}
\caption{NMSE for Rayleigh channels.}
\label{Fig:NMSE1}
\end{minipage}
\begin{minipage}[t]{0.49\linewidth}
\centering
\includegraphics[width=1.85in]{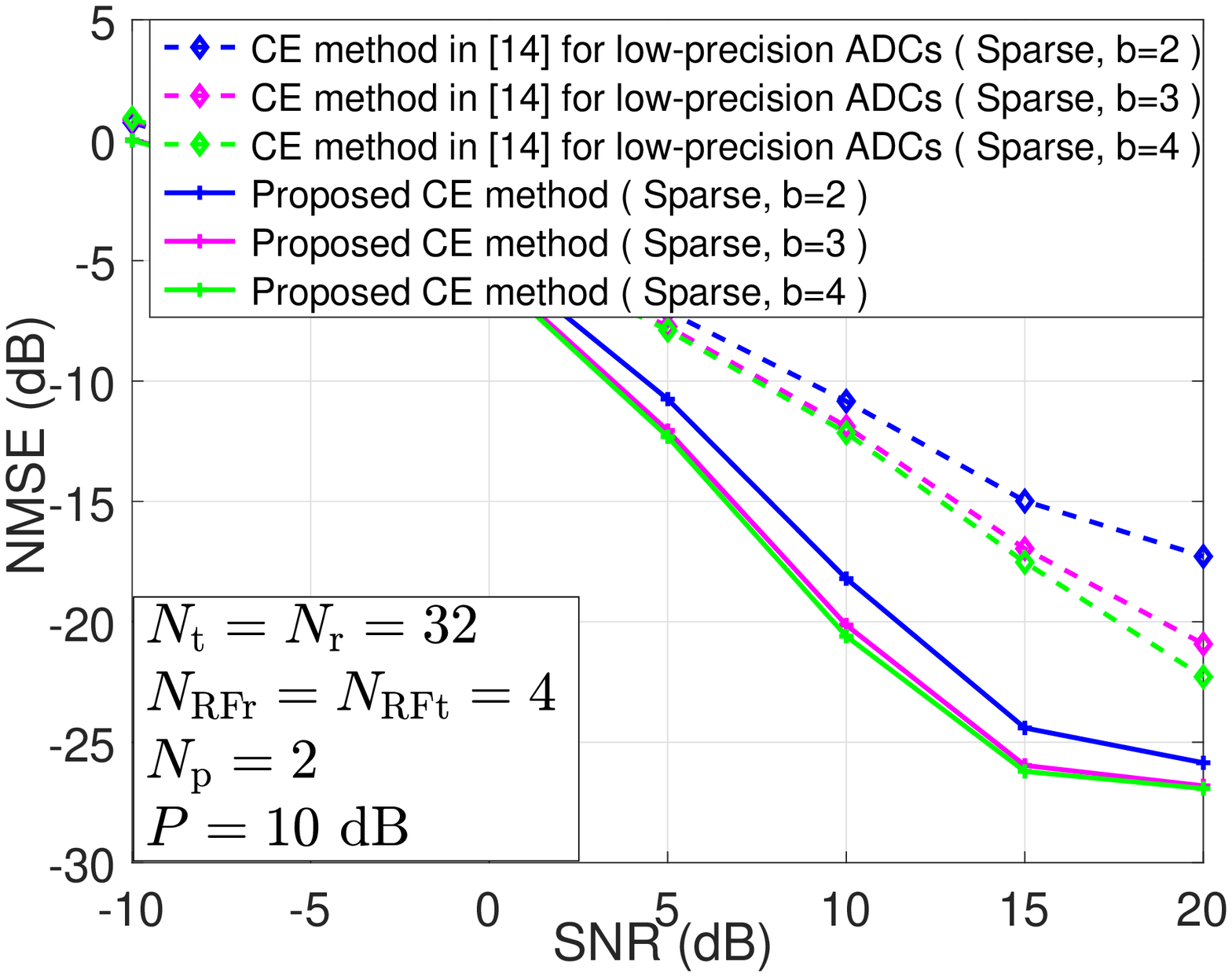}
\vspace{-0.6cm}
\caption{NMSE for sparse channels.}
\label{Fig:NMSE2}
\end{minipage}
\vspace{-0.5cm}
\end{figure}

\vspace{-0.3cm}
\section{Simulation Results}
\vspace{-0.1cm}

The performance is shown in Fig. 2 and Fig. 3 using the model in \eqref{eq:yk}, where the normalized mean square error (NMSE) is defined as
${\rm NMSE}=\mathbb{E}\Big\{\big|\big|\hat{\mathbf{h}}^{\rm NZ}_{\rm v}[k]-\mathbf{h}^{\rm NZ}_{\rm v}[k]\big|\big|_2^2\Big/\big|\big|\mathbf{h}^{\rm NZ}_{\rm v}[k]\big|\big|_2^2\Big\}.$

The figures show the comparison between our proposed method and the conventional ones, e.g., \cite{Jianhua2018TSP}, under Rayleigh and sparse channels, respectively. It shows that our method outperforms the conventional ones under various SNRs and ADC precisions for both Rayleigh and sparse channels, especially at median-to-high SNRs. For the conventional LMMSE estimator in Fig. \ref{Fig:NMSE1}, It is interesting that the NMSE grows with SNR when low-precision ADCs are utilized. This is caused by the ADC-quantization-noise-amplification effect at low signal-to-quantization-noise ratio. For the proposed method exploiting sparsity of mmWave channels, OMP is used to further improve the performance as shown in Fig. \ref{Fig:NMSE2}.

\vspace{-0.3cm}
\section{Conclusion}
\vspace{-0.1cm}
We have proposed a general estimator under arbitrary channel statistics for wideband mmWave MIMO with hybrid architecture and low-precision ADCs.

\vspace{-0.3cm}
\appendices
\section{OMP-based CS in \eqref{eq:hNZvk}}
\vspace{-0.1cm}
If sparsity presents in $\mathbf{h}_{\rm v}[k]$, we exploit OMP to estimate the locations of the dominant channel coefficients, i.e., $\mathbf{P}_{\rm v}[k]$. The problem is formulated as
\setcounter{equation}{19}
\begin{align}
&\mathbf{h}_{\rm v}^{\star}[k]=\arg\min_{\mathbf{h}_{\rm v}[k]}||\mathbf{h}_{\rm v}[k]||_1\\\nonumber
{\rm s.t.}~||&\mathbf{y}[k]-(1-\eta_b)\mathbf{\Phi}[k]\mathbf{\Psi}\mathbf{h}_{\rm v}^{\star}[k]||_2\leq\epsilon,
\end{align}
where $\epsilon$ is the stopping threshold. A suitable choice for the threshold $\epsilon$ is the noise variance, i.e., $\epsilon\triangleq\mathbb{E}\{\hat{\mathbf{e}}^{\rm H}[k]\hat{\mathbf{e}}[k]\}$ which for our specific problem is evaluated in \eqref{eq:sigmae} of Appendix~B.

\vspace{-0.3cm}
\section{LMMSE Estimator $\mathbf{W}^{\rm \star}_{\rm D}[k]$}
\vspace{-0.1cm}
The optimization of $\mathbf{W}_{\rm D}[k]$ for estimation in terms of MSE requires the evaluation of some covariance matrices. We present the useful results as follows:
\begin{mylemma}
As $N_{\rm r}, N_{\rm t}\rightarrow \infty$, the variance of $\hat{\mathbf{e}}[k]$ equals
\begin{equation}\label{eq:sigmae}
\mathbb{E}\big\{\hat{\mathbf{e}}^{\rm H}[k]\hat{\mathbf{e}}[k]\big\}=MN_{\rm RFr}(1-\eta_b)(\sigma_v^2+\eta_b\sigma_{\rm h}^2PN_{\rm RFt}).
\end{equation}
\end{mylemma}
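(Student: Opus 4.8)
The plan is to obtain the scalar $\mathbb{E}\{\hat{\mathbf{e}}^{\rm H}[k]\hat{\mathbf{e}}[k]\}$ as the trace of the full covariance matrix $\mathbb{E}\{\hat{\mathbf{e}}[k]\hat{\mathbf{e}}^{\rm H}[k]\}$, so the real work is to show that this covariance collapses to a scaled identity and then to read off the scale. First I would write $\mathbb{E}\{\hat{\mathbf{e}}^{\rm H}[k]\hat{\mathbf{e}}[k]\}={\rm Tr}\big(\mathbb{E}\{\hat{\mathbf{e}}[k]\hat{\mathbf{e}}^{\rm H}[k]\}\big)$ and substitute the Bussgang decomposition $\hat{\mathbf{e}}[k]=(1-\eta_b)\mathbf{e}[k]+\mathbf{e}_q$ from \eqref{eq:yk}. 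Using that, under Bussgang, $\mathbf{e}_q$ is uncorrelated with its Gaussian input $\mathbf{x}[k]\triangleq\mathbf{\Phi}[k]\mathbf{\Psi}\mathbf{h}_{\rm v}[k]+\mathbf{e}[k]$ and has covariance $\eta_b(1-\eta_b){\rm diag}\big(\mathbb{E}\{\mathbf{x}[k]\mathbf{x}^{\rm H}[k]\}\big)$, the cross terms drop and I recover step $(a)$ of \eqref{eq:Eee}: a linear-gain AWGN part plus an $\eta_b$-weighted diagonal distortion part.

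Next I would evaluate the two statistical ingredients. The AWGN block $\mathbf{e}_m[k]=\mathbf{W}_{{\rm A}m}^{\rm H}\mathbf{v}_m[k]$ with $\mathbf{v}_m[k]\sim\mathcal{CN}(\mathbf{0},\sigma_v^2\mathbf{I}_{N_{\rm r}})$ has covariance $\sigma_v^2\mathbf{W}_{{\rm A}m}^{\rm H}\mathbf{W}_{{\rm A}m}$, and independence across trainings makes the stacked covariance block-diagonal. The channel enters through the whitened model $\mathbb{E}\{\mathbf{h}_{\rm v}[k]\mathbf{h}_{\rm v}^{\rm H}[k]\}=\sigma_{\rm h}^2\mathbf{I}_{N_{\rm r}N_{\rm t}}$, so the signal part of ${\rm diag}\big(\mathbb{E}\{\mathbf{x}[k]\mathbf{x}^{\rm H}[k]\}\big)$ reduces to $\sigma_{\rm h}^2{\rm diag}\big(\mathbb{E}\{\mathbf{\Phi}[k]\mathbf{\Psi}\mathbf{\Psi}^{\rm H}\mathbf{\Phi}^{\rm H}[k]\}\big)$, with the signal--noise cross term vanishing by zero mean and independence. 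Folding the AWGN contributions from both the linear and diagonal parts together collapses the two $\sigma_v^2$ terms into the single clean coefficient $\sigma_v^2$ of step $(b)$ of \eqref{eq:Eee}.

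The core computation is \eqref{eq:Ephipsi}. Here I would invoke the exact unitarity of the DFT-type dictionary $\mathbf{\Psi}=\mathbf{A}_{\rm t}^{*}\otimes\mathbf{A}_{\rm r}$, so $\mathbf{\Psi}\mathbf{\Psi}^{\rm H}=\mathbf{I}$ and the term becomes $\mathbb{E}\{\mathbf{\Phi}[k]\mathbf{\Phi}^{\rm H}[k]\}$. The Kronecker form $\mathbf{\Phi}_m[k]=(\mathbf{s}_m^{\rm T}[k]\mathbf{F}_{{\rm A}m}^{\rm T})\otimes\mathbf{W}_{{\rm A}m}^{\rm H}$ together with the mixed-product rule $(A\otimes B)(C\otimes D)=(AC)\otimes(BD)$ factorizes each diagonal block into the scalar $\mathbf{s}_m^{\rm T}[k](\mathbf{F}_{{\rm A}m}^{\rm H}\mathbf{F}_{{\rm A}m})^{*}\mathbf{s}_m^{*}[k]$ times $\mathbf{W}_{{\rm A}m}^{\rm H}\mathbf{W}_{{\rm A}m}$. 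Letting $N_{\rm r},N_{\rm t}\to\infty$, the two Gram matrices of the random unimodular analog matrices concentrate to (normalized) identities, so the scalar collapses to $\mathbf{s}_m^{\rm T}[k]\mathbf{s}_m^{*}[k]=PN_{\rm RFt}$ by the pilot power constraint, giving $PN_{\rm RFt}\mathbf{I}_{MN_{\rm RFr}}$. Back-substitution produces $\mathbb{E}\{\hat{\mathbf{e}}[k]\hat{\mathbf{e}}^{\rm H}[k]\}=(1-\eta_b)(\sigma_v^2+\eta_b\sigma_{\rm h}^2PN_{\rm RFt})\mathbf{I}_{MN_{\rm RFr}}$, whose trace over its $MN_{\rm RFr}$ diagonal entries is exactly the claimed $MN_{\rm RFr}(1-\eta_b)(\sigma_v^2+\eta_b\sigma_{\rm h}^2PN_{\rm RFt})$.

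The main obstacle is the asymptotic step: justifying that $\mathbf{W}_{{\rm A}m}^{\rm H}\mathbf{W}_{{\rm A}m}$ and $\mathbf{F}_{{\rm A}m}^{\rm H}\mathbf{F}_{{\rm A}m}$ tend to identity is precisely what the $N_{\rm r},N_{\rm t}\to\infty$ hypothesis buys, since their off-diagonal entries are normalized sums of independent zero-mean unit-modulus phases that vanish by the law of large numbers, turning the otherwise merely block-structured covariances into exact scaled identities. The other delicate point is the Bussgang bookkeeping: one must keep the factor $\eta_b$ attached only to the diagonal of the input covariance and verify that the $(1-\eta_b)$ and $\eta_b$ noise contributions recombine into the single coefficient $\sigma_v^2$ rather than $(1-\eta_b)\sigma_v^2$.
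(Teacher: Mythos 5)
Your proposal is correct and takes essentially the same route as the paper's own proof: your Bussgang bookkeeping (quantization error uncorrelated with the Gaussian input, covariance $\eta_b(1-\eta_b){\rm diag}(\cdot)$) is precisely what the paper imports from \cite{mezghani2012capacity} as step $(a)$ of \eqref{eq:Eee}, and your use of $\mathbf{\Psi}\mathbf{\Psi}^{\rm H}=\mathbf{I}$, the Kronecker mixed-product factorization, the pilot power constraint, and the law-of-large-numbers collapse of $\mathbf{F}_{{\rm A}m}^{\rm H}\mathbf{F}_{{\rm A}m}$ and $\mathbf{W}_{{\rm A}m}^{\rm H}\mathbf{W}_{{\rm A}m}$ to identities reproduces \eqref{eq:Ephipsi} step by step. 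The only difference is presentational: you make explicit the recombination of the $(1-\eta_b)^2\sigma_v^2$ and $\eta_b(1-\eta_b)\sigma_v^2$ terms and the final trace, which the paper leaves implicit.
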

\begin{proof}
Given $\mathbf{F}_{{\rm A}m}$ and $\mathbf{W}_{{\rm A}m}$ with phases chosen uniformly from $[0,2\pi)$, we have $[\mathbf{F}_{{\rm A}m}]_{ij}=\frac{1}{\sqrt{N_{\rm t}}}e^{j\phi_{ij}}$ and $[\mathbf{W}_{{\rm A}m}]_{ij}=\frac{1}{\sqrt{N_{\rm r}}}e^{j\psi_{ij}}$ with $\phi_{ij}, \psi_{ij}\sim U[0,2\pi)$. For large $N_{\rm t}$ and $N_{\rm r}$, applying the Law of Large Numbers, we have
\begin{equation}\label{eq:EFAm}
\mathbb{E}\{\mathbf{F}^{\rm H}_{{\rm A}m}\mathbf{F}_{{\rm A}m}\}=\mathbf{I}_{N_{{\rm RFt}}},~
\mathbb{E}\{\mathbf{W}^{\rm H}_{{\rm A}m}\mathbf{W}_{{\rm A}m}\}=\mathbf{I}_{N_{{\rm RFr}}}.
\end{equation}
Using \eqref{eq:EFAm} and from $\mathbf{e}[k]$ in \eqref{eq:yk1}, we obtain
\begin{equation}\label{eq:EeeH}
\mathbb{E}\{\mathbf{e}[k]\mathbf{e}^{\rm H}[k]\}=\sigma_v^2\mathbf{I}_{MN_{\rm RFr}}.
\end{equation}

Assume that $\mathbf{h}_{\rm v}[k]$ satisfies $\mathbb{E}\{\mathbf{h}_{\rm v}[k]\mathbf{h}^{\rm H}_{\rm v}[k]\}=\sigma^2_{\rm h}\mathbf{I}_{N_{\rm r}N_{\rm t}}$ with $\sigma^2_{\rm h}$ known in advance. Then the autocorrelation matrix of $\hat{\mathbf{e}}[k]$ is calculated in \eqref{eq:Eee} where $(a)$ uses \cite[eq.~(30)]{mezghani2012capacity}, $(b)$ uses \eqref{eq:EeeH}, and $(c)$ uses \eqref{eq:Ephipsi} which characterizes the value of the matrix ${\rm diag}\left(\mathbb{E}\{\mathbf{\Phi}[k]\mathbf{\Psi}\mathbf{\Psi}^{\rm H}\mathbf{\Phi}^{\rm H}[k]\}\right)$. Note that in \eqref{eq:Ephipsi}, $(a)$ uses
the property of $\mathbf{\Psi}$ that
\setcounter{equation}{25}
\begin{equation}\label{eq:phiphiH}
\mathbf{\Psi}\mathbf{\Psi}^{\rm H}\!=\!\mathbf{A}_{\rm t}^{*}\mathbf{A}_{\rm t}^{\rm T}\!\otimes\!\big(\mathbf{A}_{\rm r}\mathbf{A}_{\rm r}^{\rm H}\big)\!=\!\mathbf{I}_{N_{\rm r}N_{\rm t}},
\end{equation}and $(b)$ uses \eqref{eq:EFAm}.
\end{proof}

Using Lemma 1 and from \eqref{eq:BhNZv}, we are ready to give the MSE of $\mathbf{h}^{\rm NZ}_{\rm v}[k]$ in \eqref{eq:MSE}. By forcing $\frac{\partial({\rm MSE})}{\partial\mathbf{W}_{\rm D}^{*}[k]}$ in \eqref{eq:partial} to zero, we obtain the optimal $\mathbf{W}^{\rm \star}_{\rm D}[k]$ as desired in \eqref{eq:optW}.

\vspace{-0.1cm}


\begin{thebibliography}{1}
\footnotesize
\vspace{-0.1cm}
\bibitem{LeLiangWCL}
L. Liang, W. Xu, and X. Dong, ``Low-complexity hybrid precoding in massive multiuser MIMO systems,'' \emph{IEEE Wireless Commun. Lett.}, vol.~3, no. 6, pp. 653--656, Oct. 2014.

\bibitem{ghauch2015subspace}
H. Ghauch, T. Kim, M. Bengtsson, and M. Skoglund, ``Subspace estimation and decomposition for large millimeter-wave MIMO systems,'' \emph{IEEE J. Sel. Topics Signal Process.}, vol. 10, no. 3, pp. 528--542, Apr. 2015.

\bibitem{venugopal2017channel}
K. Venugopal \emph{et al.}, ``Channel estimation for hybrid architecture-based wideband millimeter wave systems,'' \emph{IEEE J. Sel. Areas Commun.}, vol.~35, no. 9, pp. 1996--2009, Sept. 2017.

\bibitem{dailinlong2016CL}
Z. Gao \emph{et al.}, ``Channel estimation for millimeter-wave massive MIMO with hybrid precoding over frequency-selective fading channels,'' \emph{IEEE Commun. Lett.}, vol. 20, no. 6, pp. 1259--1262, Jun. 2016.

\bibitem{heath2018JSTSP}
J. P. G.-Coma \emph{et al.}, ``Channel estimation and hybrid precoding for frequency selective multiuser mmWave MIMO systems,'' \emph{IEEE J. Sel. Topics Signal Process.}, vol. 12, no. 2, pp. 353--367, May 2018.

\bibitem{walden1999analog}
R. Walden, ``Analog-to-digital converter survey and analysis,'' \emph{IEEE J. Sel. Areas Commun.}, vol. 17, no. 4, pp. 539--550, Apr. 1999.

\bibitem{li2017channel}
Y. Li \emph{et al.}, ``Channel estimation and performance analysis of one-bit massive MIMO systems,'' \emph{IEEE Trans. Signal Process.}, vol. 65, no. 15, pp.~4075--4089, Aug. 2017.

\bibitem{alkhateeb2014channel}
A. Alkhateeb \emph{et al.}, ``Channel estimation and hybrid precoding for millimeter wave cellular systems,'' \emph{IEEE J. Sel. Topics Signal Process.}, vol. 8, no. 5, pp. 831--846, Oct. 2014.

\bibitem{dailinlong2016JSAC}
P. Zhao \emph{et al.}, ``Hybrid precoding with two-layer phase shifter feeding network for mmWave FD-MIMO systems,'' in \emph{Proc. IEEE GLOBECOM}, Singapore, Singapore, Dec. 2017, pp. 1--6.

\bibitem{mezghani2012capacity}
A. Mezghani and J. Nossek, ``Capacity lower bound of MIMO channels with output quantization and correlated noise,'' in \emph{Proc. IEEE ISIT}, Cambridge, MA, USA, July 2012.

\bibitem{Xu2017WCL}
J. Xu, W. Xu, and F. Gong, ``On performance of quantized transceiver in multiuser massive MIMO downlinks,'' \emph{IEEE Wireless Commun. Lett.}, vol. 6, no. 5, pp. 562--565, Oct. 2017.

\bibitem{bussgang1952crosscorrelation}
J. J. Bussgang, ``Crosscorrelation functions of amplitude-distorted Gaussian signals,'' Res. Lab. Electron., Massachusetts Inst. Technol., Cambridge, MA, USA, Tech. Rep. 216, Mar. 1952.

\bibitem{Max1960TIT}
J. Max, ``Quantizing for minimum distortion,'' \emph{IEEE Trans. Inf. Theory}, vol. 6, no. 1, pp. 7--12, Mar. 1960.

\bibitem{Jianhua2018TSP}
J. Mo, P. Schniter, and R. W. Heath, ``Channel estimation in broadband millimeter wave MIMO systems with few-bit ADCs,'' \emph{IEEE Trans. Signal Process.}, vol. 66, no. 5, pp. 1141--1154, Mar. 2018.


\end{thebibliography}
\end{document}